\begin{document}

\title {Microgrid Revenue Maximization by Charging Scheduling of EVs in Multiple Parking Stations}

\author{\IEEEauthorblockN{Bahram Alinia}
		\IEEEauthorblockA{Department RS2M,
			Institut Mines-Telecom\\
			Telecom SudParis, Evry, France\\
			bahram.alinia@telecom-sudparis.eu}
		\and
		\IEEEauthorblockN{Mohammad H. Hajiesmaili}
		\IEEEauthorblockA{The Chinese University of\\
			Hong Kong, China\\
			mohammad@ie.cuhk.edu.hk}
		\and
		\IEEEauthorblockN{Noel Crespi}
		\IEEEauthorblockA{Department RS2M,
			Institut Mines-Telecom\\Telecom SudParis,
			Evry, France\\
			noel.crespi@mines-telecom.fr}}

\newtheorem{theorem}{Theorem}[section]
\newtheorem{lemma}[theorem]{Lemma}
\newtheorem{proposition}[theorem]{Proposition}
\newtheorem{corollary}[theorem]{Corollary}


\maketitle

\begin{abstract}
Nowadays, there has been a rapid growth in global usage of the electronic vehicles (EV). Despite apparent environmental and economic advantages of EVs, their high demand charging jobs pose an immense challenge to the existing electricity grid infrastructure. In microgrids, as the small-scale version of traditional power grid, however, the EV charging scheduling is more challenging. This is because, the microgrid owner, as a large electricity customer, is interested in shaving its global peak demand, i.e., the aggregated demand over multiple parking stations, to reduce total electricity cost. While the EV charging scheduling problem in single station scenario has been studied extensively in the previous research, the microgrid-level problem with multiple stations subject to a global peak constraint is not tackled. This paper aims to propose a near-optimal EV charging scheduling mechanism in a microgrid governed by a single utility provider with multiple charging stations. The goal is to maximize the total revenue while respecting both local and global peak constraints. The underlying problem, however, is a NP-hard mixed integer linear problem which is difficult to tackle and calls for approximation algorithm design. We design a primal-dual scheduling algorithm which runs in polynomial time and achieves bounded approximation ratio. Moreover, the proposed global scheduling algorithm applies a valley-filling strategy to further reduce the global peak. Simulation results show that the performance of the proposed algorithm is $98\%$ of the optimum, which is much better than the theoretical bound obtained by our approximation analysis. Our algorithm reduces the peak demand obtained by the existing alternative algorithm by $16\%$ and simultaneously achieves better resource utilization.
\end{abstract}

\begin{keywords}
Smart grid, electric vehicle, scheduling, approximation
\end{keywords}

\IEEEpeerreviewmaketitle

\section{Introduction}
\label{sec:intro}
As a result of global warming and environmental concerns through dependence on fossil fuels, there has been a rapid proliferation in deploying renewable energy sources. To promote quick adoption of green renewable energy sources, electrification of vehicles is a trend that has been globally advocated in the recent years. With the significant advantage of Electric Vehicles (EVs) in being an environment friendly product, the global interest for using EVs is rapidly growing, such that global sale of EVs increased by about $80\%$ in $2015$~\cite{EVSales}.

\subsection{Microgrid: Definition, Potential, and Challenges}
\label{sec:introA}
Another common trend in the smart grid era is to utilize the potentials of microgrids and distributedly install renewable energy sources. More specifically, medium or large commercial and industrial energy customers such as universities, headquarters, etc.,  can take control of their own energy consumptions by building a microgrid. In microgrid, different (and mainly renewable) energy sources such as solar panels and wind turbines can partially fulfill the energy demand of a local customer \cite{Minghua2013Sigmetrics}. In addition, microgrids can usually work in ``grid-connected'' mode such that they can acquire their residual demand (total demand subtracted by the local renewable supply) from the external grid. In addition to clean energy production, microgrids can also enjoy more power sustainability and reliability, and the ability to pro-actively manage the energy costs.  

Among different approaches toward managing the microgrid cost, the peak-demand, i.e., the maximum energy drawn over a billing cycle, is an important factor that can significantly impact the total energy cost of the microgrids~\cite{Zhang2015eEnergy}. This is because the real-world pricing scheme for medium and large customers is usually a hybrid time-of-use and peak-based charging model where the peak demand over the billing cycle can significantly impact the total energy cost, e.g., the peak price is often more than $100$ times higher than the on-peak price \cite{PeakPrice}. Considering the total aggregated demand, the peak charge contribution in each billing cycle could be as large as $20\%$ to $80\%$ of the total costs \cite{Xu2014Reducing}. Consequently, a substantial cost reduction could be achieved if microgrid owner can pro-actively control its total peak-demand. 

\subsection{EV Charging Scheduling in Microgrids}
The charging requirement of the microgrid EVs is a portion of the total microgrid electricity demand that plays a significant role in microgrid peak-demand control because of the following two reasons: (i) EV charging jobs contribute significantly in total electricity demand of the microgrid, e.g., currently transportation consumes around $29\%$ of the total energy in the US, while electricity consumes around $40\%$, hence, with the current rapid electrification of the vehicles~\cite{EVSales}, it is clear that the total electricity demand of EVs is considerable, and (ii) EV charging is a flexible and deferrable demand, which mean that the microgrid owner can \textit{schedule} their charging jobs. \textit{Putting together this two issues, the main aim of this paper is to use the deferrable property of EVs and schedule their charging jobs which are a considerable portion of microgrid's electricity demand, so as to intelligently control the peak-demand of the microgrid.}

To respond to the electricity demand of EVs, charging stations are being used where EVs can recourse to charge their battery. There can be few to many charging stations dispersed in the single-owner microgrid, e.g., each for the parking lot of a building in a headquarter campus. In microgrid, all parking stations are usually governed by a single utility provider. In such a situation, the microgrid owner can determine a \textit{global} peak demand constraint for EVs such that the aggregate charging demand in different charging stations is less than this global peak demand.

The charging scheduling of EVs in microgrid with the goal of respecting the \textit{global aggregated peak constraint}, however, is a unique problem which is different from the single station EV charging scheduling. Most of the existing work in the literature, tackle the problem in the single parking station scenario. We refer to Section~\ref{sec:rel} for in-depth discussion. As will be discussed in Section \ref{sec:alg}, the global optimal solution cannot be necessary obtained by separately solving the single station problems. 
On the other hand, there are only a few studies that provide global optimal solution for charging scheduling of EVs \cite{He, Moradijoz, Malhotra}. Despite elegant results, the underlying problems
(cost minimization problem in \cite{He}, optimal station capacity problem in \cite{Moradijoz} and  user convenience maximization problem in \cite{Malhotra}) are different from the problem studied in this paper (see Section~\ref{sec:rel} for details). 

\subsection{Problem and Contributions}
In this paper, we aim to employ the deferrable feature of high demand charging jobs  of microgrid EVs and tackle microgrid EV charging scheduling problem to control the peak-demand of the microgrid. In particular, we assume that in a microgrid, multiple charging stations are available for a set of EVs to get charged. In a general scenario, we assume that the EVs are heterogeneous in terms of availability, charging demand, and valuation for getting charged. Then, the goal is to \textit{select} and \textit{schedule} a subset of EVs such that: (i) the charging demand of the selected EVs are fulfilled; (ii) global and local peak constraints of the microgrid are respected;  and finally, (iii) the total microgrid revenue obtained by  the valuation of the selected EVs is maximized. The underlying optimization problem, however, is a mixed integer linear problem which is a NP-hard problem, then it is difficult to solve in general. In this paper, we tackle this problem by pursuing the approximation algorithm design approach and making the following contributions: 


\begin{itemize}

\item We identify that the EV charging scheduling problem in a single parking station is similar to the problem of scheduling deadline sensitive jobs in a cloud server~\cite{Jain}. Then, we extend the problem for the general case with multiple station taken into account. The formulated problem is a ``time-expanded'' extension of the well-known knapsack problem and hence is NP-hard. 

\item We design a primal-dual scheduling algorithm which runs in polynomial time. We then analyze the performance of the algorithm by constructing a particular form of the linear relaxed version of the mixed integer problem. Based on weak duality property, we obtain the approximation ratio of $\alpha = {\Big( 1+\sum_j {\frac{C_j}{C_j-K_j}}.\frac{s}{s-1}\Big)}$, where $C_j$ is local peak constraint in station $j$, $K_j$ is the maximum charging rate of the EVs in station $j$ and $s$ is a slackness parameter. We highlight that when $C_j \gg K_j$, then $\alpha \approx m$, where $m$ is the number of parking stations. As compared to the proposed solution in~\cite{Jain}, our algorithm has an elegant additional step to apply a valley filling strategy and enhance actual peak of the system without degrading the total revenue obtained. 

\item  We conduct a set of simulations to evaluate the performance of our proposed approximation algorithms. The results for a set of representative runs reveal that the proposed scheduling algorithm is $96\%$ close to the optimum (i.e., optimal total revenue), on average, which is much better than the obtained theoretical approximation bound. Moreover, the algorithm results in $4\%$ higher resource utilization and significantly reduces the global microgrid peak by $16\%$ as compared to the previous study \cite{Jain}.
\end{itemize}


\subsection{Paper Organization}
The rest of this paper is organized as follows. Section~\ref{sec:rel} reviews the literature. In Section \ref{sec_SystemModel} the system model  and problem formulation for charging scheduling problem is described. Section \ref{sec:alg} provides a near optimal scheduling algorithm for the problem and discuss on the approximation bound. The algorithm is evaluated through simulation in Section \ref{sec:simul} and finally Section \ref{sec:conclusion} concludes the paper.
\label{sec_intro}

\section{Relared Work}
\label{sec:rel}

\subsection{EV Charging Scheduling in Single Station Scenario}
For a single parking station, the problem of optimal scheduling has attracted substantial research studies. Most researches addressed optimizing charging cost \cite{Tan, Shao, Vagropoulos} and user convenience level \cite{Wen}. Also, a few studies consider joint optimization of charging for user and aggregator \cite{Jian, Jin}. In \cite{Tan}, the total energy cost for charging station is minimized through load smoothing. The authors propose a solution without relying on future load information and guarantee fulfillment of all demands for a feasible scheduling problem. \cite{Vagropoulos} considers cost minimization for a local EV aggregator in a market for a centralized real-time EV charging management. The optimal solution obtained by appropriately setting charging power of EVs every few seconds during the working hours. \cite{Shao} addressed minimizing grid generation cost for a large population of EVs. The generation cost consists of the fuel, and startup/shutdown cost of EV. A hierarchical mechanism including both grid and EV constraints by adapting Benders cut to coordinate between different levels is proposed but scheduling of EVs is not part of the work. \cite{Wen} addressed the problem of maximizing user convenience. As solution, an optimal subset of EVs is selected while meeting circuit-level demand limits. However, the problem is solved for only one time slot. 

\subsection{EV Charging Scheduling in Multiple Station Scenario} 
Studies in \cite{He,Malhotra,Akhavan} addressed charging scheduling problem in multiple station setting. In \cite{He}, a global scheduling optimization problem in a system consisting of a central controller and multiple local controllers to minimize the total cost for charging EVs is tackled. However, there is no limit on the maximum peak demand which system can tolerate. Consequently, the peak value can be arbitrary high depending on the total submitted demands. This can result in a big billing cost for the microgrid owner and also pose danger for the grid system when the EV penetration level is too high. Besides, the charger devices installed in the parking stations have limitation on the maximum electricity that they can transfer in a time unit \cite{Tesla}. We solve the issue by setting local and global peak constraints. However, to meet the peak constraints, it may not be feasible to respond to all charging demands and consequently, only a subset of EVs can be charged \cite{Akhavan, Lee}. We will apply a priority based selection process (based on the EVs' valuation) to respond to the demands to make sure that the peak varies between desired and safe values. The selection process can be made based on different criteria such as the value of each charging request \cite{Lee, Xiang} or the priority \cite{Akhavan}. As an alternative approach to control peak, some other studies directly targeted minimizing peak \cite{Zhao, Karfopoulos}. In \cite{Zhao}, an intelligent online algorithm is developed for EV charging to minimize the peak load by minimizing the impact of variability and uncertainty of renewable energies inside the grid. \cite{Karfopoulos} considers valley filling by leveraging V2G in peak hours. Although the peak is minimized in these studies, it cannot guarantee that the minimized peak is still desirable and tolerable. 

To avoid big billing cost in peak hours, \cite{Moradijoz} proposed a solution based on genetic algorithm to find optimal capacity and location of parking lots for serving demands in peak hours with the goal of maximizing total benefit of all stations. Although authors study the problem under multiple stations setting, their solution is not applicable when the parking stations are already set up. The study in \cite{Malhotra} is the most relevant work to our case where both local peak and global peak constraints in multiple station scenario are considered. The objective in \cite{Malhotra} is to maximize user convenience level which is different from the aim of this paper.

\section{System Model}
\label{sec_SystemModel}

We consider a set $\mathcal{P}$ of $m$ charging stations all governed by a single utility provider for EV charging purpose. The time horizon is divided to $T$ equal length time slots ${t=1, 2, \dots ,T}$ (e.g., $T=24$ with time slots of $1$ hour length). There is a total number of $n$ EVs denoted by set $\xi$, all available at time $t=1$ to be charged. Each EV $i$ is associated with a deadline $d_{i}$ and a charging demand $D_i$. We assume that the EVs select their charging station and so the assignments are given to the problem. Moreover, $P(i)$ is charging station of EV $i$ where $P$ defined as $P:\xi\rightarrow  \mathcal{P}$. If EV $i$ is charged before its deadline the gain is $v_{i}$ and zero, otherwise i.e., there is no partial credit for partial charging \cite{Akhavan,Xiang,Shroff2014}. For each EV $i$, the battery charging operation can be scheduled to be done in time interval $[1,d_i]$ where the charging rate in each time slot is bounded to $k_i$, a value dependent to physical properties of the EV's battery. Moreover, $K_j, j=1, 2, \dots , m$ denotes the maximum $k_i$ among all EVs in parking station $j$ i.e., $K_j=\max_{P(i)=j} k_i$. It is assumed that 
for each EV $i$, its demand and deadline represent a feasible charging profile with respect to its maximum charging rate $k_i$ and a slackness parameter $s\geq 1$ which is the minimum ratio between an EV's deadline and its minimum charging time. More specifically, we have ${D_i\leq \frac{k_id_i}{s}}$. The slackness parameter is imposed by the microgrid in order to make the charging scheduling flexible. With $s=1$ there is the minimum level of flexibility which enhances by increasing $s$. 
In each time slot, the total electricity drawn from the microgrid by station $j$ and the sum of  electricity consumed by $m$ parking stations are limited to a specific amount $C_j$ and $C_\textrm{total}$, respectively. As explained in Section \ref{sec:introA}, these local peak and global peak constraints are set based on cost effective consumption policy or due to the fact that charger devices has constraint on the maximum electricity that they can output in a time slot \cite{Wen, Xiang, Tesla}. 

\subsection{Problem formulation}

The goal of this section is to formulate an optimization problem to schedule the charging of the EVs with the objective of maximizing the total value gained from charged EVs in $m$ parking lots while respecting local and global peak constraints. The original charging scheduling problem is a mixed linear integer programming since in our model the total resources that each EV $i$ receives is constrained to be $D_i$ if the EV is selected or $0$, otherwise. The original integer problem is a generalized form of the $0-1$ Knapsack problem which is a well-known NP-hard problem. We will give an intuition to understand the similarity of these problems Section \label{subsec:algorithm} however, skip to prove that there is a polynomial time algorithm to reduce $0-1$ Knapsack problem to the original problem due to space limitation and the straightforwardness of the proof. For the ease of solution design, we formulate a relaxed linear version of the problem as follows: \vspace{4mm}


\begin{align}
Z: \ &\max_{} \quad\sum_i \frac{v_{i}}{D_i}\sum_t y_i(t) \label{Pobjec} \\
&\ \textrm{s.t.}\ \ \ \sum_{t} y_i(t)\leq D_i,\quad\quad \forall i \label{ConstPartial}\\
&\quad\quad\ \ \sum_{i} y_i(t)\leq C_{\textrm{total}} ,\quad\quad\forall t \label{ConstTotal}\\
&\quad\quad \sum_{i:P(i)=j} y_i(t)\leq C_j ,\quad\quad\forall t, j\label{ConstLocal}\\
&\quad\quad\ \ y_i(t) - \frac{k_i}{D_i}\sum_{t^\prime} y_i(t^\prime) \leq 0, \quad\quad\forall i, t\label{ConstSpeed}\\
&\quad\quad\ \ y_i(t) \geq 0, \quad\quad\forall i, t\label{ConstPositive}.
\end{align}

In the formulation, $y_i(t)$ is the amount that EV $i$ is charged in time slot $t$. Constraint (\ref{ConstPartial}) ensures that this amount is less than or equal to the requested demand $D_i$.  When we are designing our algorithm, we do not let partial charging. The global and local peak constraints are represented by constraints (\ref{ConstTotal}) and (\ref{ConstLocal}). In each time slot, the charging rate of EV $i$ should be less than or equal to its maximum charging rate i.e., $y_i(t)\leq k_i, \forall i,t$. This is shown in a strength form by constraint (\ref{ConstSpeed}). This form of the constraint is used to reduce the integrality gap of the relaxed linear problem \cite{Carr}. Finally, constraint (\ref{ConstPositive}) is added since the received resource in each time slot should be a non-negative value.

Problem $Z$ is an extension of formulated problem in \cite{Jain} where the resource allocation problem for deadline sensitive jobs in a cloud computing system for a single cloud center is studied. It turns out that the resource allocation problem in cloud systems and EV charging scheduling problem in a \emph{single station} share similar structure. Indeed, each EV charging profile in our scheduling problem can be seen as a task in cloud system with a deadline, value and a maximum CPU time unit to process the job. In this paper, we design a near-optimal scheduling mechanism to solve problem $Z$ with bounded approximation gap where the performance analysis relies on a dual fitting method. For this purpose, we construct the dual form of the problem  as follows:\vspace{4mm}

\begin{align}
\bar{Z}:\ &\min_{} \quad\sum_{i} D_i\alpha _i + \sum_{i=1}^m \sum_{t=1}^T C_i\beta (t)+\sum_{t=1}^T C_{\textrm{total}}\gamma (t)\label{DdualObject}\\ 
&\ \textrm{s.t.}\ \ \alpha _i+\beta (t) +\gamma _i + \pi (t)- \frac{k_i}{D_i}\sum_{t^\prime\leq d_i} \pi _i(t^\prime) \geq \frac{v_{i}}{D_i},\label{ConstDcover} \nonumber\\
& \hspace{50mm}\quad\quad\quad\forall i,t \leq d_i\\
&\quad\quad\quad \alpha _i, \beta _i, \gamma,  \pi _i(t)\geq 0,\quad\quad\forall i,t
\end{align}

In the dual problem, the dual variables $\alpha ,\gamma ,\beta$ and $\pi_i(t)$ are associated with constraints (\ref{ConstPartial}), (\ref{ConstTotal}), (\ref{ConstLocal}) and (\ref{ConstPositive}) in the primal problem, respectively. The approximation algorithm is explained in the next section. 

\section{Scheduling Algorithm}
\label{sec:alg}

\subsection{SCS Algorithm}
\label{subsec:algorithm}
In primal-dual algorithm, the goal is to design an algorithm in a way that it produces a good solution for primal problem (with primal value $\Gamma$) and a feasible solution for the dual problem (with dual value $\Lambda$). Then, assuming that the primal problem is a maximization problem, to prove that the algorithm is $\alpha-$approximation for $\alpha\leq 1$, the important part is to show that $\Lambda\leq\frac{1}{\alpha}\Gamma$. Then, since based on weak duality theorem we have $\Lambda\geq OPT$, it is concluded that $\Gamma \geq\alpha OPT$ where $OPT$ is the optimal value.
We design our scheduling algorithm referred as \emph{Smart Charging Scheduling (SCS)} algorithm based on the basic algorithm proposed in \cite{Jain} for job scheduling problem in cloud computing. Then, we analyze approximation factor of the proposed algorithm for the multiple station problem.  We stress that the proposed solution in \cite{Jain} is only applicable on a single station scenario and solving scheduling problem separately in each station does not guarantee that the final aggregated peak of parking lots is lower than the global peak constraint. SCS algorithm is listed in Algorithm 1.

\vspace{4mm}

\begin{algorithm}
\caption{Smart Charging Scheduling (SCS)}
\label{SCS}
 \DontPrintSemicolon 
\KwIn{$n$ EVs with $v_i, d_i$ and $k_i$ associated with each EV $i$, and $m$ parking stations}

\KwOut{A feasible scheduling of EVs}
\BlankLine

\textbf{initialize}: $y\leftarrow 0, \alpha\leftarrow 0, \beta\leftarrow 0, \gamma\leftarrow 0, \pi\leftarrow 0$

sort charging requests in non-decreasing order of value/demand ratio: $v_1\slash D_1\geq v_2\slash D_2\geq\dots \geq v_n\slash D_n$

\slash\slash\textit{\texttt{Use sorted list to process demands}}

\For{(i=1...n)}{

\slash\slash\textit{\texttt{if enough resources remain for EV $i$}}

\If{($\sum_{t\leq d_i}\min\{\bar{W}(t,P(i)), k_i\}\geq D_i$)}
{
SmartAllocate($i$)
}
\Else {
\If{$(\beta (d_i)=0)$}{
$\beta$\_cover($i$)}
}
}

\For{(i=1...n)}{ 
\If {EV $i$ is not selected}{
ReConsider($i$);
}
}
\end{algorithm}
\vspace{4mm}
In the algorithm, $W(t,P(i))=\sum_{i': P(i')=P(i)} y_{i'}(t)$ is the total workload at time slot $t$ in the parking station $P(i)$ and $\bar{W}(t,P(i))$ is the total available load to allocate in time slot $t$ in station $P(i)$. We always have ${\bar{W}(t,P(i))+W(t,P(i))=C_{P(i)},  \forall t, i}$. 

The SCS algorithm works in two phases. In the first phase it sorts the charging requests based on their marginal value $\frac{v_i}{D_i}$ in a non-decreasing order. Then, starting from the top of the sorted list, if the remaining resource is enough for fully responding the current EV's demand, the charging operation will be scheduled. Otherwise, there will be no charging and the next demand in the list will be processed. More precisely, when processing the charging request for EV $i$, the algorithm checks for the feasibility of allocating $D_i$ units of electricity resources to EV $i$ before its deadline $d_i$ without violating maximum charging rate constraint $k_i$ in each time slot (Line 6). If feasibility check passed, SCS calls sub-procedure SmartAllocate($i$) to allocate required resources in interval $[1,d_i]$. After allocation, SmartAllocate($i$) sets $\alpha _i$ to $v_i\slash D_i$ in order to cover dual constraint in Equation (\ref{ConstDcover}). 

If there is not enough resources to charge EV $i$ to its demand $D_i$, no charging will be done. However, we still need to satisfy constraint (\ref{ConstDcover}) in dual problem. For this purpose, the $\beta$\_cover($i$) algorithm inside the main procedure is called when EV $i$ is not a candidate of getting charged. Note that to cover the constraint (\ref{ConstDcover}) for EV $i$, $\beta (t)$ for $t=1,\dots ,d_i$ should be greater than or equal to $\frac{v_i}{D_i}$. $\beta$\_cover($i$) sets $\beta (t)$ to $\frac{v_i}{D_i}$ for all time slots $t$ in interval $[t_\textrm{cov}, R(d_i)]$ (Line 3 and 4 of the algorithm). Observe that when $t_\textrm{cov}>1$ we have $\beta (t^\prime)\geq\frac{v_i}{D_i}, \forall t^\prime< t_\textrm{cov}$ considering that the demands are sorted in a non-increasing order according to their marginal value and $\beta (t')$ is already set to $\frac{v_{i'}}{D_{i'}}$ when processing the earlier charging demand of EV $i'$ in the list which is not selected. Hence, $\frac{v_{i'}}{D_{i'}}\geq\frac{v_i}{D_i}$ and we have $\beta (t)\geq\frac{v_i}{D_i},\ \forall t\in [1,d_i]$ and the dual constraint is satisfied.

Lines $1-4$ of algorithm $\beta$\_cover($i$) is enough to cover dual constraint. However, for any already selected EV $i'$, the algorithm continues in Lines $5-8$ by setting a variable $\Phi_{i'}(t)$ for time slots $t=1,\dots ,R(d_i)$ to a value dependent to amount of the resource that EV $i'$ received in time slot $t$. The value of $\Phi_{i'}(t)$ will be used to analyze the approximation factor of the main algorithm in Section \ref{subsec:analysis} and has no effect on the real scheduling of EVs. We borrowed algorithm $\beta$\_cover($i$) from \cite{Jain} and adjusted it for multiple station mode. 

When EV $i$ is selected to be charged, SmartAllocate($i$) is called to allocate resources. In resource allocation phase, SmartAllocate($i$) applies two main policies: 1) \emph{flat allocation} and, 2) \emph{right-to-left allocation}. With flat allocation, time slots with more available resources i.e., $\bar{W}(i,P(i))$ are preferred to be used for charging purpose. This is in fact a \emph{valley-filling} policy which helps to reduce final peak of the SCS algorithm. It also can be seen as a smoothing method which tries to reduce variance of the allocated resources in different time slots. The simulation results in Section \ref{sec:simul} will confirm this claim. Right-to-left allocation is used when two or more time slots are equal based on the remaining resource. When this policy applies on EV $i$, any EV $i'$ with $i'>i$ and $d_{i'}<d_i$ is more likely to be acceptable for charging since the algorithm tends to charge EV $i$ in time slots in interval $[d_{i'}+1,d_i]$ and keep resources in $[1,d_{i'}]$ for EV $i'$. A ranking based approach is used to apply the aforementioned policies. 
For charging  EV $i$, we rank time slots in interval $[1,d_i]$. Then, charging is done by allocating resources from the higher ranked time slot to lowest one. The rank of a time slot $t$ is obtained based on remaining resources in time slot $t$ (flat allocation) and value of $t$ (right-to-left allocation). 

\vspace{4mm}
\begin{algorithm}
\caption{SmartAllocate($i$)}
\label{SmartAllocate}
 \DontPrintSemicolon 
\KwIn{EV $i$ to be charged}

Rank time slots in interval $[1,d_i]$ such that for $t_1$ and $t_2$:
$rank(t_1)>rank(t_2)$ iff $\bar{W}(t_1,P(i))>\bar{W}(t_2,P(i))$ OR $\bar{W}(t_1,P(i))==\bar{W}(t_2,P(i)) \wedge t_1>t_2$ 

\While {EV $i$ is not fully allocated}{

Select time slot $t$ with highest rank which is not selected before

Allocate $\min \{k_i,\bar{W}(t,P(i))\}$ resources for EV $i$ in time slot $t$

}

$\alpha _i\leftarrow v_i\slash D_i$
\end{algorithm}
\vspace{5mm}

\begin{algorithm}
\caption{$\beta$\_cover($i$)}
\label{BetaCover}
 \DontPrintSemicolon 
\KwIn{EV $i$ which is not selected to charge}

\BlankLine

$t_\textrm{cov}\leftarrow \min \{t: \beta (t)=0\}$

$R(d_i)=\max \{t\geq d_i : \forall t^\prime\in (d_i,t], \bar{W}(d_i)<K_{P(i)} \}$

\For{$(t=t_{cov}\dots R(d_i))$}{
	$\beta (t)\leftarrow v_i\slash D_i$
}

\For{$(t=1\dots R(d_i))$}{
	\For{$(i'=1\dots n))$}{	
	\If{$y_{i'}(t)>0 \wedge \Phi_{i'}(t)=0$}{
		$\Phi_{i'}(t)\leftarrow\big[ \frac{C_{P(i)}}{C_{P(i)}- k_i}\frac{s}{s-1}\big] .\frac{v_i}	{D_i}y_{i'}(t)$
	}
	}
}
\end{algorithm}

\begin{algorithm}
\caption{ReConsider($i$)}
\label{Replace}
 \DontPrintSemicolon 
\KwIn{EV $i$}
\KwOut{Updated schedule}

Let $L$ be an empty list 
 
$v_\textrm{inc}\leftarrow v(i)$

Define $\delta_i(t)=\min \{k_i,\bar{W}(t,P(i))\}$ 

$\Delta\leftarrow \sum_{t\leq d_i}\min\{\bar{W}(t,P(i)), k_i\}, t=1,\dots d_i$

\For{($i^\prime =i-1\dots 1$)}{ 
	\If {EV $i'$ is selected $\wedge P(i')=P(i) \wedge v_\textrm{inc}-v_{i'}>0$}{
		Add EV $i'$ to list $L$	
		
		$v_\textrm{inc}\leftarrow v_\textrm{inc}-v_{i'}$
		
		\For{($t=1\dots d_i$)}{
			$\delta_i(t)\leftarrow \min\{k(i),\delta_i(t)+y_{i'}(t)\}$
		}
	}
}

\If {$\sum_{t\leq d_i}\delta_i(t)\geq D_i$}{
	Remove EVs in list $L$ from charging schedule 
	
	SmartAllocate($i$)
}
\end{algorithm}

In the second phase of SCS, the algorithm tries to increase the total value of selected EVs by calling ReConsider($i$) on every unselected EV $i$ (Line $11-13$). We note that, if in the scheduling problem we set $T=1, m=1, k_i=C_{\textrm{total}}$ and $ d_i=1 \ \forall i=1,\dots ,n$, then the problem becomes equal to the well-known 0-1 knapsack problem. Filling the knapsack by using the same greedy approach (sorting objects based on their marginal value) has the issue that the approximation factor can be arbitrary bad. For example, consider a knapsack problem with two objects where $v_1\slash D_1>v_2\slash D_2$ by having $v_1=2, v_2=C_\textrm{total}, D_1=1, D_2=C_\textrm{total}$. To maximize the total value, the optimal solution here is to choose object $2$ while greedy algorithm selects object $1$ which results in a worst-case approximation factor of $\frac{c}{OPT}$ in general where $c$ is constant. To avoid this problem, one approach is \emph{re-considering} unselected objects after running greedy algorithm and replacing some selected objects in the knapsack with unselected ones and then, test if the result is improved or not. In a simple case, only the largest unselected object can be tested which makes a significant theoretical improvement by providing a worst case approximation factor of $\frac{1}{2}OPT$ . SCS algorithm uses the same idea but with a more intelligent replacing method ReConsider($i$). ReConsider($i$) is called on unselected EVs in the sorted list by reconsidering first the more valuable unselected demands. The algorithm tries to charge an unselected EV by removing some already selected EVs where the total valuation of the selected EVs is less than the value of the unselected EV. If this case occurs, replacement is confirmed which improves the total value of selected EVs. ReConsider($i$) searches for candidates to be replaced among the less valuable requests by processing the sorted list from the end. After that SCS calls ReConsider($i$) on all unselected EVs $i$ following the greedy based selection process, an improvement in the total revenue of finally selected EVs is expected. This will be confirmed by the simulation results in Section \ref{sec:simul}.

\subsection{Analysis}
\label{subsec:analysis}

To analyze performance of the SCS algorithm, we first note that the designed scheduling algorithm respects the constraints in the primal problem. Also, the algorithm produces a feasible solution for the dual problem by covering the dual problem constraint in (\ref{ConstDcover}) through setting $\alpha _i$ to $\frac{v_i}{D_i}$ when EV $i$ is accepted and, $\beta (t)$ to a value greater than or equal to $\frac{v_i}{D_i}$ for $t\leq d_i$ (according to the discussion in Section \ref{subsec:algorithm}) if EV $i$ is not selected. Now, if we bound the total covering cost of the dual constraints, we can obtain an approximation factor for the algorithm. 

\begin{theorem}
SCS algorithm is $\Big( 1+ \sum_{j=1}^m {\frac{C_j}{C_j-K_j}}.\frac{s}{s-1}\Big)$-approximation.
\end{theorem}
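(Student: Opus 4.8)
The plan is to use the standard primal-dual (dual-fitting) template sketched just before the theorem: exhibit the primal value $\Gamma$ produced by SCS, exhibit a feasible dual solution with value $\Lambda$, and prove $\Lambda \le \alpha\,\Gamma$ with $\alpha = 1+\sum_{j=1}^m \frac{C_j}{C_j-K_j}\cdot\frac{s}{s-1}$; weak duality then gives $\Gamma \ge \frac{1}{\alpha}\,\mathrm{OPT}$. Feasibility of the dual has essentially been argued in Section~\ref{subsec:algorithm} (setting $\alpha_i = v_i/D_i$ for accepted EVs and $\beta(t)\ge v_i/D_i$ on $[1,d_i]$ for rejected EVs covers constraint~(\ref{ConstDcover}); the remaining dual variables stay at $0$), so the work is almost entirely in bounding $\Lambda$.

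First I would write the dual objective~(\ref{DdualObject}) as a sum of three contributions and bound each against the revenue of the \emph{accepted} EVs. The $\sum_i D_i\alpha_i$ term is exactly $\sum_{i \text{ accepted}} v_i = \Gamma_{\text{greedy}}$ (and the second, ReConsider phase only increases the true primal value, so $\Gamma \ge \Gamma_{\text{greedy}}$); this is where the leading ``$1$'' in $\alpha$ comes from. The $\gamma(t)$ and $\pi_i(t)$ variables contribute $0$. The crux is the $\sum_{j}\sum_t C_j\,\beta(t)$ term coming from $\beta$\_cover: I would charge the cost $C_j\,\beta(t)$ incurred at a time slot $t$ in station $j$ to the resource actually allocated there, using the $\Phi_{i'}(t)$ quantities defined in $\beta$\_cover, namely $\Phi_{i'}(t) = \big[\frac{C_{P(i)}}{C_{P(i)}-k_i}\cdot\frac{s}{s-1}\big]\cdot\frac{v_i}{D_i}\,y_{i'}(t)$. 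The key structural fact to establish is that when $\beta$\_cover($i$) runs because EV $i$ could not fit, every relevant time slot $t\le R(d_i)$ in station $P(i)$ is ``nearly full'' — the available room $\bar W(t,P(i)) < k_i$ — so the workload there exceeds $C_{P(i)}-k_i$, i.e.\ $W(t,P(i)) > C_j - K_j$; this is exactly the definition of $R(d_i)$ and is the reason the factor $\frac{C_j}{C_j-K_j}$ appears.

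The main obstacle — and the step I would spend the most care on — is the double-counting/amortization argument showing that the total $\beta$-cost summed over \emph{all} rejected EVs and all stations is at most $\sum_j \frac{C_j}{C_j-K_j}\cdot\frac{s}{s-1}\cdot\Gamma_{\text{greedy}}$. Two subtleties must be handled: (i) a given accepted EV $i'$ and slot $t$ should be charged only once across the possibly many rejected EVs whose $\beta$\_cover touches it — this is precisely what the guard $\Phi_{i'}(t)=0$ in $\beta$\_cover enforces, and one shows that the rejected EV responsible for first setting $\Phi_{i'}(t)$ has the smallest $v/D$ ratio among them, so $\frac{v_i}{D_i}\le \frac{v_{i'}}{D_{i'}}$ and the charge to $(i',t)$ is at most $\big[\frac{C_j}{C_j-K_j}\frac{s}{s-1}\big]\frac{v_{i'}}{D_{i'}}y_{i'}(t)$; (ii) summing this over $t$ for a fixed accepted $i'$ uses $\sum_t y_{i'}(t) = D_{i'}$, collapsing the per-slot charges to $\big[\frac{C_j}{C_j-K_j}\frac{s}{s-1}\big]v_{i'}$. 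The slackness factor $\frac{s}{s-1}$ enters because the ``full'' interval $[1,R(d_i)]$ on which $\beta$ is raised can be longer than $[1,d_i]$, and one uses $D_i \le k_i d_i/s$ together with the definition of $R(d_i)$ to bound the ratio of the covered length to the allocated mass; I would isolate this as a separate lemma-style computation. Finally, summing part (ii) over all accepted $i'$ in station $j$ and then over $j$ gives the $\sum_j \frac{C_j}{C_j-K_j}\frac{s}{s-1}\Gamma_{\text{greedy}}$ bound, and adding the leading $\Gamma_{\text{greedy}}$ from the $\alpha_i$ term yields $\Lambda \le \alpha\,\Gamma_{\text{greedy}} \le \alpha\,\Gamma$, completing the proof. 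The $C_j\gg K_j$ remark in the introduction then follows since each summand tends to $1$.
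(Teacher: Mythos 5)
Your proposal follows essentially the same dual-fitting route as the paper: the leading $1$ comes from $\sum_i D_i\alpha_i=\sum_{i\in S}v_i$, the $\beta$-cost is charged to allocated resources via the $\Phi_{i'}(t)$ variables with the per-station factor $\frac{C_j}{C_j-K_j}\cdot\frac{s}{s-1}$, and each station's revenue is then bounded by the total revenue before summing over $j$ (the paper does this last step through a $\max_j\{b_j\}$ manipulation, you do it directly; the two are equivalent). The only substantive difference is that you sketch proofs of the single-station charging inequalities (near-fullness of slots in $[1,R(d_i)]$, the once-only charging guard, and the role of $D_i\le k_id_i/s$), which the paper simply imports from the cited work of Jain et al.\ as its inequalities~(\ref{EqCharge}) and~(\ref{EqBound}); one minor wording slip is that the rejected EV that first sets $\Phi_{i'}(t)$ has the \emph{largest} value-to-demand ratio among the rejected EVs touching that slot, though the inequality you actually need, $v_i/D_i\le v_{i'}/D_{i'}$, holds for any rejected $i$ processed after the accepted $i'$ and so your argument is unaffected.
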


\begin{proof}
We sum up all costs of covering dual constraints and then provide a bound for it. Each EV is either selected or not selected.

For the unselected EVs, $\sum_{j=1}^m \sum_{t=1}^T C_j\beta (t)$ determines the cost. When $\beta$\_cover($i$) is running as a result of charging request disapproval of EV $i$, for any previously accepted request $i'$ the algorithm sets $\Phi_{i'}(t)$ to a value proportional to $y_{i'}(t)$ for $t\leq R(d_i)$ (Line 8 of $\beta$\_cover($i$) algorithm). The followings are proved in \cite{Jain} for a single station $i$:

\begin{equation}
\sum_{i'\in S} \sum_{t\leq d_{i'}} \Phi_{i'}(t) \leq \big[ \frac{C_i}{C_i-K_i}.\frac{s}{s-1} \big].\sum_{i'\in S}v_{i'}
\label{EqCharge}
\end{equation} 

\begin{equation}
\sum_{t=1}^T C_i\beta (t)\leq \sum_{i'\in S} \sum_{t\leq d_{i'}} \Phi_{i'}(t).
\label{EqBound}
\end{equation}

For $m$ parking stations, we can obtain the following inequality based on (\ref{EqCharge}) and (\ref{EqBound}),

\begin{align}
\sum_{i=1}^m \sum_{t=1}^T C_i\beta (t)&\leq \sum_{i=1}^m \Big( \frac{C_i}{C_i-K_i}.\frac{s}{s-1}\sum_{i'\in S, P(i')=P(i)}v_{i'}\Big)\nonumber\\\label{EqMS}.
\end{align}

Now for notation convenience let's define $a_i, b_i$ and $F$ as follows:

\begin{align} 
a_i=\frac{C_i}{C_i-K_i}.\frac{s}{s-1}\nonumber\\ b_i=\sum_{j\in S, P(j)=P(i)}v_j,\nonumber\\ F=\sum_{i=1}^m b_i=\sum_{i\in S} v_i\nonumber. 
\end{align}

We can write the right hand side of (\ref{EqMS}) as follows:

\begin{align}
\sum_{i=1}^m a_ib_i&=\sum_i \Big[ a_i(F-\sum_{j\neq i}b_j)\Big]\nonumber\\
&=\sum_{i=1}^m a_iF - \sum_{i=1}^m \Big[ a_i\sum_{j\neq i}b_j \Big]\nonumber\\
&=\sum_{i=1}^m a_iF - \sum_{i=1}^m \Big[ a_i\big( F-b_i\big) \Big]\nonumber\\
&\leq\sum_{i=1}^m a_iF - \sum_{i=1}^m \Big[ a_i\big( F-\max_j \{b_j\} \big) \Big]\nonumber\\
&=\max_i\{b_i\} \sum_{i=1}^m a_i\label{Eqabf}
\end{align}

From (\ref{EqMS}) and (\ref{Eqabf}) we have, 

\begin{align}
\sum_{i=1}^m \sum_{t=1}^T C_i\beta (t)&\leq \max_i\{b_i\}\sum_{i=1}^m{\frac{C_i}{C_i - K_i}}.\frac{s}{s-1}
\end{align}

For the selected EVs, the covering cost is determined by the term $\sum_{i} D_i\alpha _i$ in the dual objective which equals to $\sum_{EV i\in S} v_i$ where $S$ is the set of selected EVs. Therefore, the total cost of covering dual constraints equals to

\begin{align}
&\sum_{i:EV i\in S}v_i + \max_i\{b_i\}\sum_i{\frac{C_i}{C_i-K_i}}.\frac{s}{s-1}\nonumber\\
&\leq \sum_{i:EV i\in S}v_i + \sum_{i:EV i\in S}v_i\sum_i{\frac{C_i}{C_i-K_i}}.\frac{s}{s-1}\nonumber\\
&=\Big[ 1+\sum_i{\frac{C_i}{C_i-K_i}}.\frac{s}{s-1}\Big] \sum_{i:EV i\in S}v_i\label{Eq:bound}
\end{align}
\end{proof}

\section{Simulation Results}
\label{sec:simul}

\begin{figure*}[t!]	
	\centering
	\begin{subfigure}[b]{0.3\textwidth}
		\begin{center}
			\includegraphics[width=\textwidth]{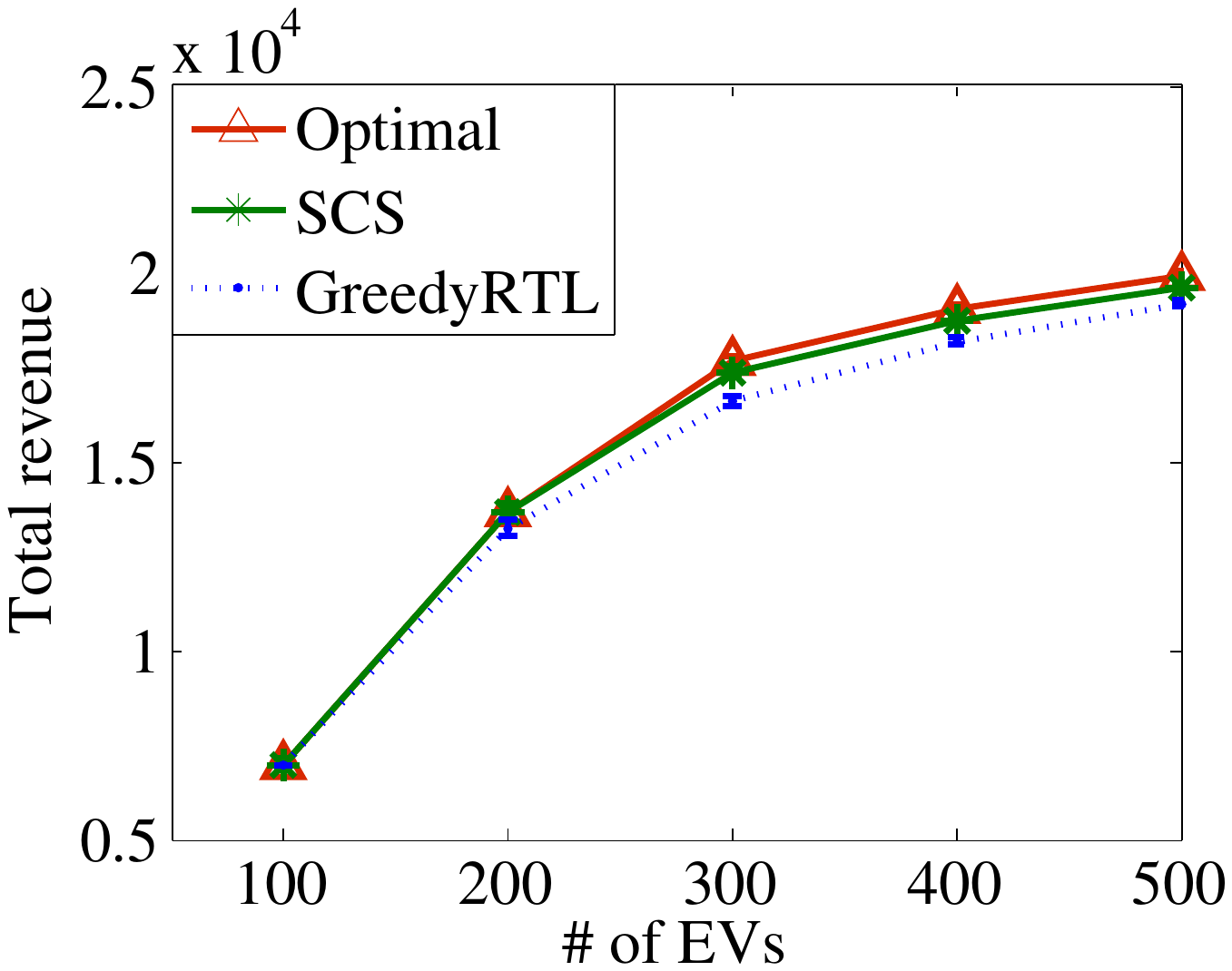}
			\caption{Total revenue}
			\label{fig:v-N}
		\end{center}
	\end{subfigure}%
	~ 
	\begin{subfigure}[b]{0.3\textwidth}
		\begin{center}
			\includegraphics[width=\textwidth]{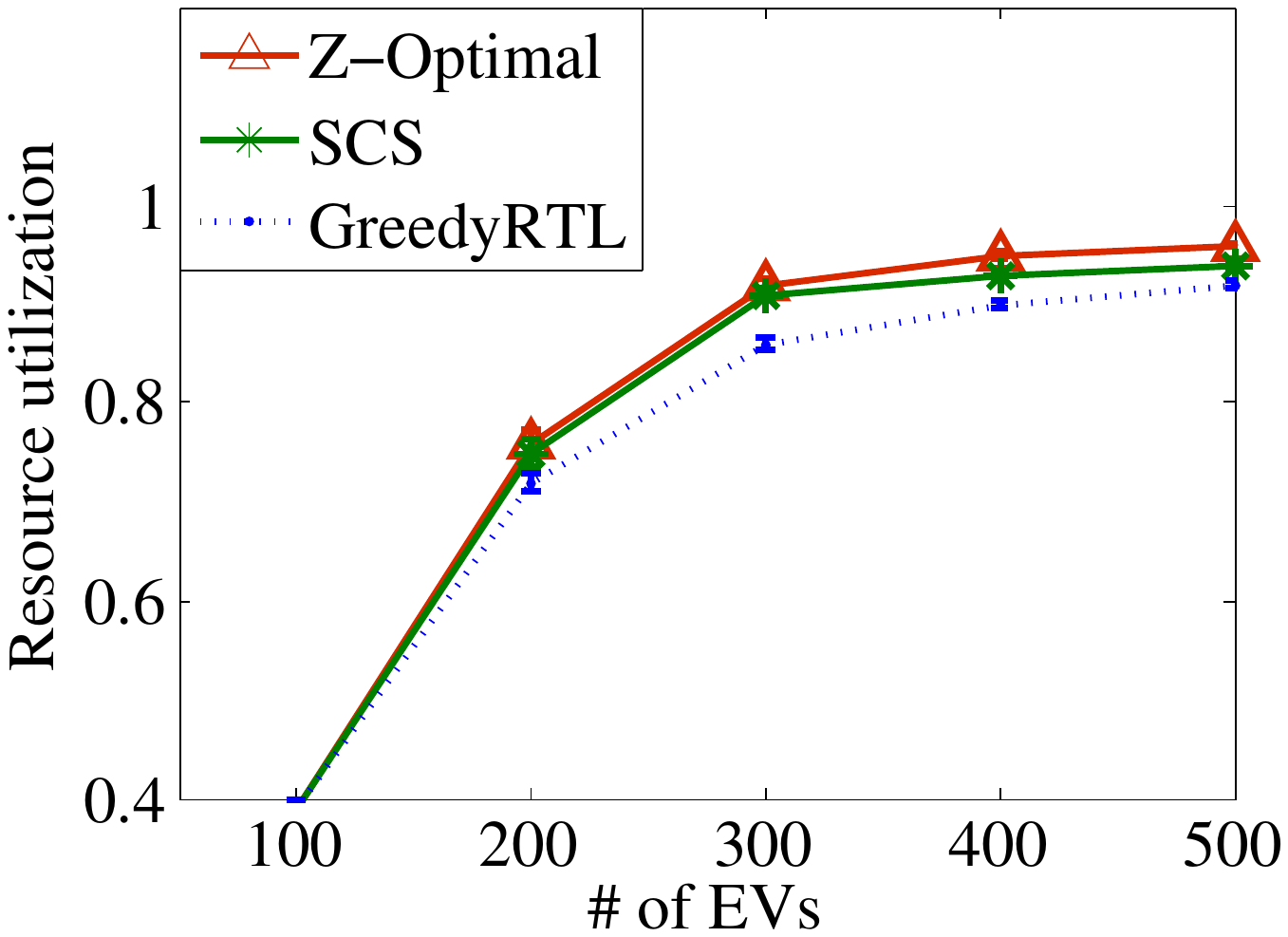}
			\caption{Utilization}
			\label{fig:ut-N}
		\end{center}
	\end{subfigure}%
	\begin{subfigure}[b]{0.3\textwidth}
		\begin{center}
			\includegraphics[width=\textwidth]{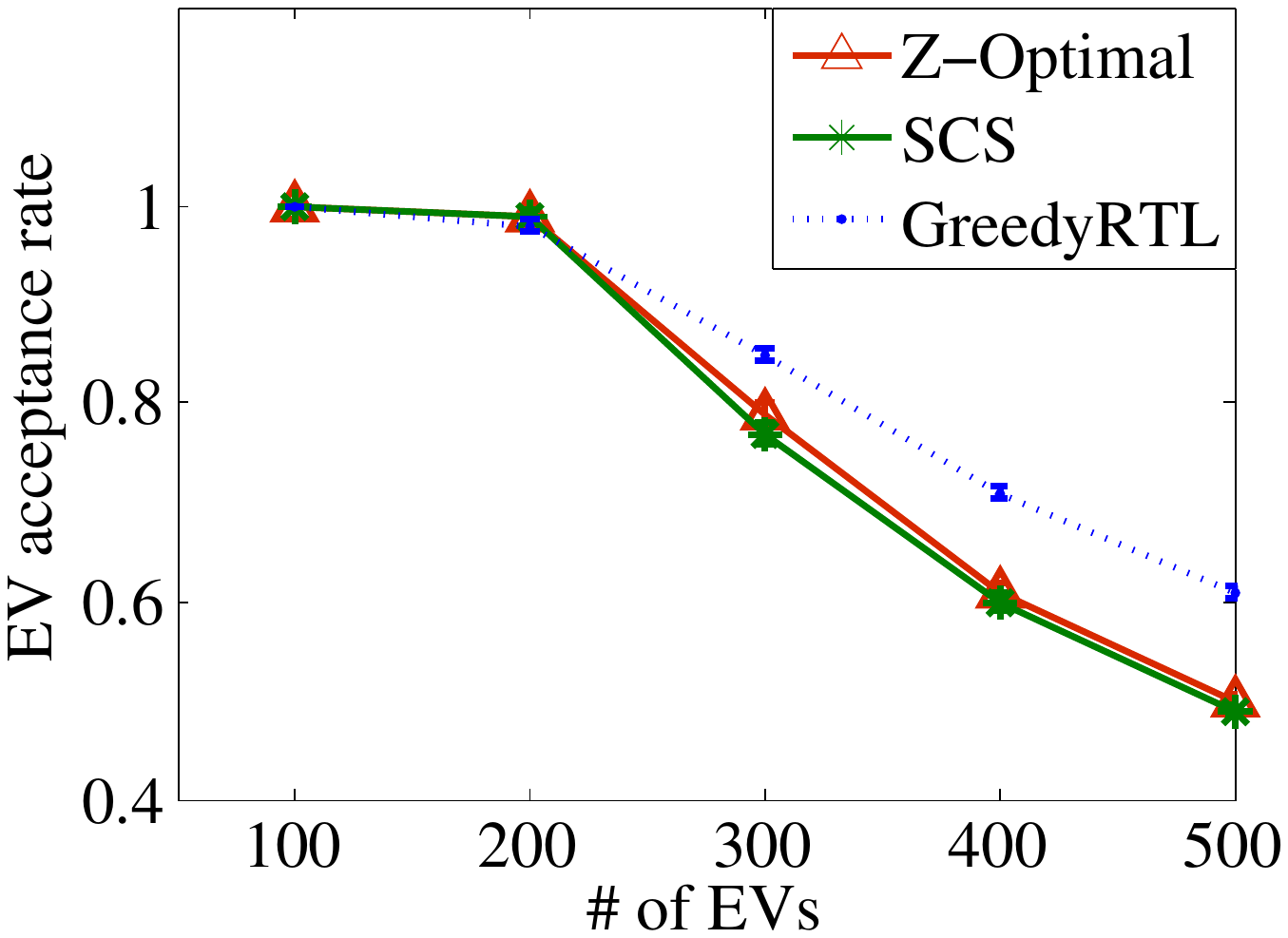}
			\caption{Acceptance rate}
			\label{fig:acc-N}
		\end{center}
	\end{subfigure}%
	\vspace{3mm}

	\caption{Comparison based on EV populations size for revenue, utilization and acceptance rate.} 
	\label{fig:EV_number}
\end{figure*}

In this section, we perform simulation studies to evaluate the performance of our proposed scheduling algorithm. The default simulation setting is as follows. We consider charging scheduling of EVs during a day divided to 24 time slots of length 1 hour. There are 4 parking stations and 200 EVs where EVs are randomly assigned to stations which gives an average of 50 EVs per station. The deadlines are randomly generated based on the assumptions that EV owners usually prefer to take their cars in some special time slots including $07:00$ a.m. to $09:00$ a.m., $12:00$ p.m. to $02:00$ p.m. and $04:00$ p.m. to $07:00$ p.m. The maximum charging rate parameter $k_i$, is set randomly between 1 and 20 ($kWh$), and the feasible demands randomly generated respecting the deadlines and slackness parameter which is set to 1.5. In each time slot, the total peak constraint is set to $500 kWh$ and local peaks are all equal to $125 kWh$. In the simulation figures, the results are plotted with a $95\%$ confidence level and each data point represents average result of 50 random scenarios. In the figures, Z-Optimal, SCS and GreedyRTL refer to optimal solution of the problem $Z$, SCS algorithm proposed in this paper, and method of \cite{Jain}, respectively. Note that since GreedyRTL algorithm works on single station mode, in multiple station setting we run the algorithm separately in each station and let $\sum_{i=1}^m C_i\leq C_{\textrm{total}}$ otherwise, the algorithm may produce infeasible solution by violating total peak constraint. The measured performance metrics include normalized revenue (i.e., $\sum_{i\in S}v_i\slash \sum_{i=1}^n v_i$), resource utilization (i.e., $\sum_{i\in S} D_i/(T*C_\textrm{total})$), demand acceptance rate (i.e., ratio of number of selected EVs to total number of EVs) and actual total peak which should be apparently less than total peak constraint. 

\subsection{Evaluation under different number of EVs}

Fig. (\ref{fig:EV_number}) depicts some results for three algorithms when the total EV numbers increases from 100 to 500 with step size 100. The general trend is that by increasing EV population, total revenue and resource utilization increase according to Fig. \ref{fig:v-N} and Fig. \ref{fig:ut-N}. This is because with increased number of EVs, more electricity is flowed to the EVs which in turn increases the total revenue. However, the rate of accepted charging requests decreases by increasing number of EVs in Fig. \ref{fig:acc-N} as a consequence of constrained peak approach. SCS and GreedyRTL show a close-to-optimal performance based on total value of selected EVs in Fig. \ref{fig:v-N}. In this scenario, SCS is slightly better than GreedyRTL. For the resource utilization in Fig. \ref{fig:ut-N} we have the same trend of Fig. \ref{fig:v-N} and SCS improves GreedyRTL by $4\%$ on average. While SCS improves GreedyRTL in terms of both total revenue and resource utilization, the acceptance rate of GreedyRTL is $6\%$ higher than SCS, on average. To be able to show the performance difference of the three methods, we used raw values of total revenue here. However, in the remaining of this section a normalized value between $0$ and $1$ will be used. 

\begin{figure*}[t!]
	\centering
	\begin{subfigure}[b]{0.3\textwidth}
		\begin{center}
			\includegraphics[width=\textwidth]{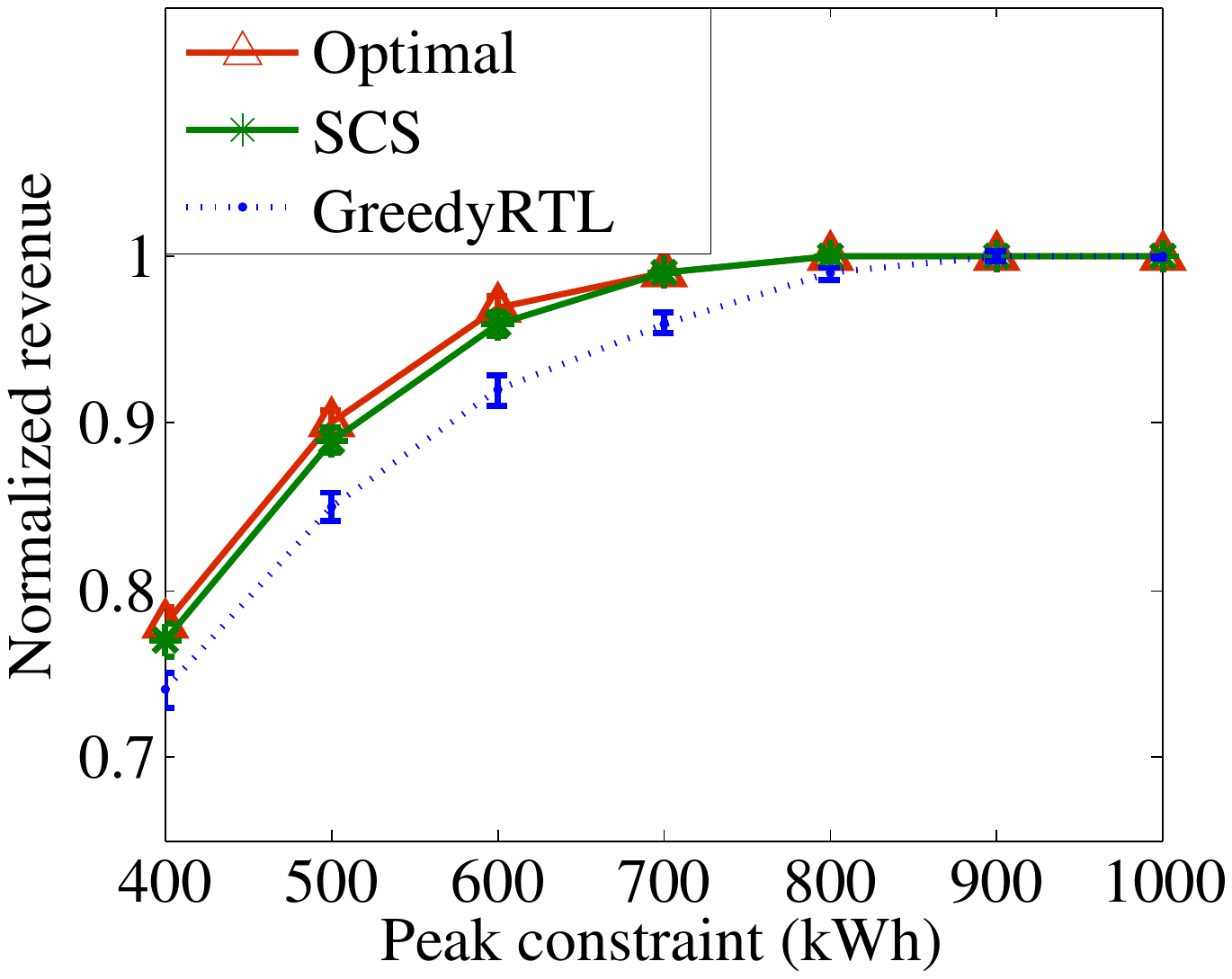}
			\caption{Normalized revenue vs. $C_{\textrm{total}}$}
			\label{fig:v-c}
		\end{center}
	\end{subfigure}%
	~ 
	\begin{subfigure}[b]{0.33\textwidth}
		\begin{center}
			\includegraphics[width=\textwidth]{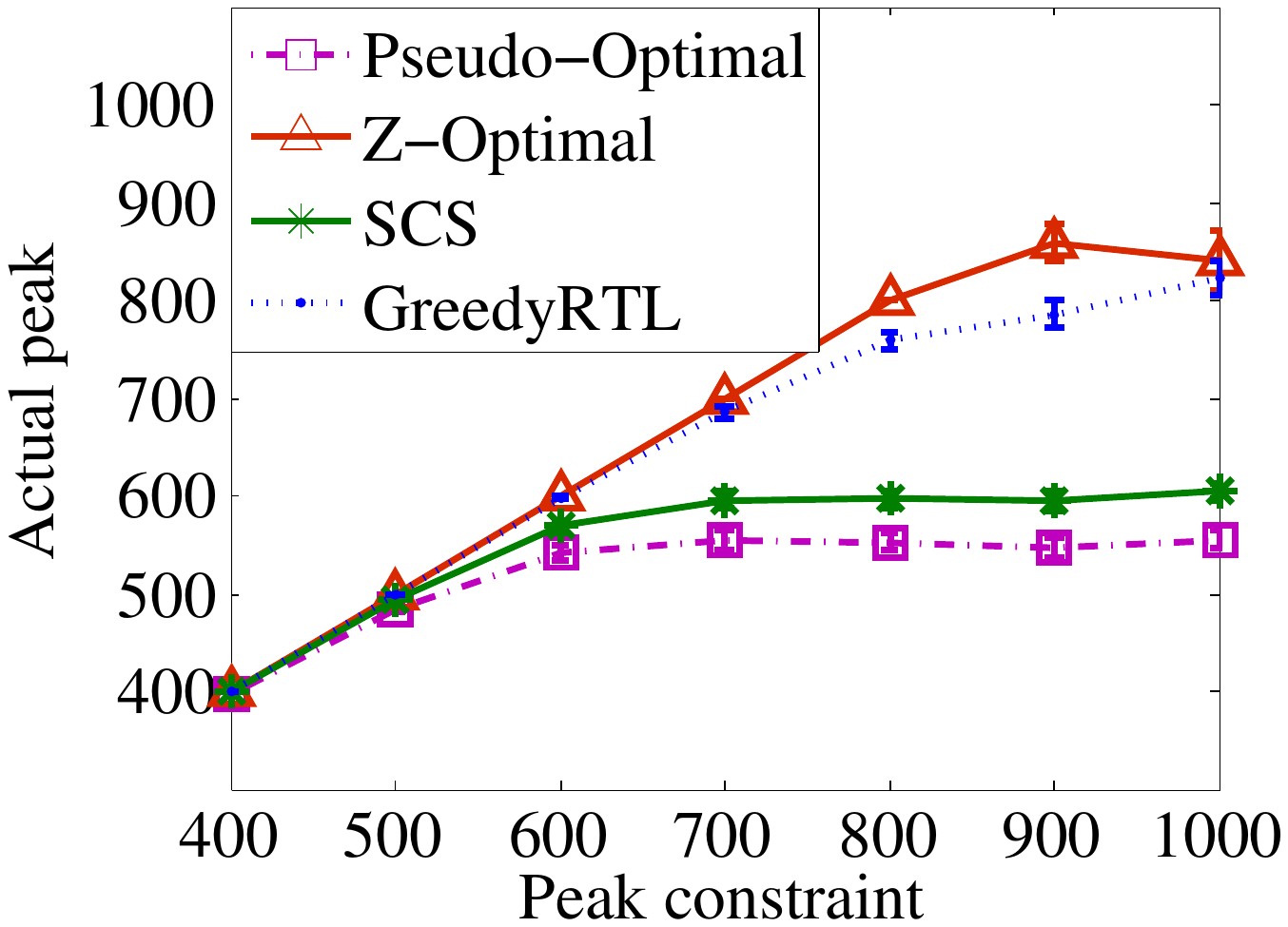}
			\caption{Actual peak vs. $C_{\textrm{total}}$}
			\label{fig:p-c}
		\end{center}
	\end{subfigure}%
	\vspace{4mm}  

	\caption{Improving actual peak by SCS without affecting total gain.} 
	\label{fig:Ex2}
\end{figure*}

\begin{figure}[t!]
	\centering
	\includegraphics[width=.3\textwidth]{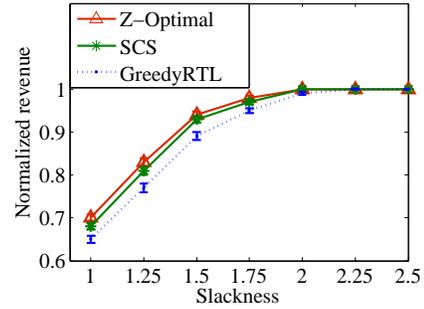}
	\caption{The effect of slackness parameter}
	\label{fig:slackness}	
\end{figure}

\subsection{Comparing actual peaks}

The constraint set in problem $Z$ assure that any feasible solution meets the total peak constrained i.e., in each time slot, the sum of total electricity consumed by stations is less than or equal to $C_\textrm{total}$. An efficient scheduling algorithm may go beyond this by keeping the peak in lower values and not only satisfying the constraint. As it is explained in Section \ref{sec:alg}, SCS applies flat allocation policy to obtain a better peak value. To show how this policy improves peak, we conducted a set of simulations and extracted both normalized revenue (related to main objective of the problem) and actual peak (as secondary performance factor) by varying total peak constraint from $400$ to $1000$ with step size $100$. In Fig. \ref{fig:v-c} it can be observed that SCS not only improves GreedyRTL in terms of problem objective, but also significantly decreases  peak value compared to both GreedyRTL and Z-Optimal with $16\%$ and $18\%$, respectively. In the figure, Pseudo-Optimal refers to the minimum value of the peak in optimal solution space of problem Z. Therefore, no better peak value than Pseudo-Optimal can be reached when the objective value of the problem $Z$ is maximized. In terms of peak value, SCS algorithm is $94\%$ close to Pseudo-Optimal, on average.

\subsection{The effect of slackness parameter}
To give the charging scheduler more flexibility and increase the gains, a slackness parameter $s\geq 1$ is used which is determined by microgrid owner. Increasing the slackness parameter is basically equal to extending EVs deadline since we allow the scheduler to finish the charging with $(s-1)*100$ percentage of delay. Therefore, we expect that increasing slackness parameter results in increased revenue. This is showed in Fig. \ref{fig:slackness} with normalized total revenue.

\bibliographystyle{IEEEtranS}

\section{Conclusion}
\label{sec:conclusion}
\label{sec:conclusion}
In this paper, we consider a electricity grid system with $m$ EV parking stations and propose a centralized scheduling mechanism to optimize charging rates and thus, maximizing total revenue for the grid. The scheduling algorithm is a ${\Big( 1+\sum_{i=1}^m {\frac{C_i}{C_i-K_i}}\frac{s}{s-1}\Big)}$-approximation solution where $C_i$ is peak constraint in station $i$, $K_i$ is the maximum charging rate and $s$ is a slackness parameter. Our solution not only provides theoretical bound on the optimality gap and approximates the optimal objective (i.e., total revenue) with an accuracy of 96\% and extends the previous studies, the average peak is 90\% of the minimum peak in optimal solution space of the main problem.
 
	\bibliographystyle{IEEEtranS}

	\bibliography{IEEEabrv,references}
	
\end{document}